  \providecommand\BibTeX{{%
    \normalfont B\kern-0.5em{\scshape i\kern-0.25em b}\kern-0.8em\TeX}}}
\useunder{\uline}{\ul}{}
\newcommand{\nosection}[1]{\vspace{2pt}\noindent\textbf{#1.}}
\newcommand{\modelname}{DGREC}
\newcommand*{\dif}{\mathop{}\!\mathrm{d}}
\newtheorem{Proposition}{Proposition}
\newtheorem{Theorem}{Theorem}
\begin{document}

\title{Decentralized Graph Neural Network for Privacy-Preserving Recommendation}


\author{Xiaolin Zheng}
\affiliation{%
  \institution{Zhejiang University}
  \department{Department of Computer Science}
  \city{Hangzhou}
  \state{Zhejiang}
  \country{China}}
\orcid{0000-0001-5483-0366}
\email{xlzheng@zju.edu.cn}

\author{Zhongyu Wang}
\affiliation{%
  \institution{Zhejiang University}
  \department{Department of Computer Science}
  \city{Hangzhou}
  \state{Zhejiang}
  \country{China}}
\orcid{0009-0002-7909-7003}
\email{iwzy7071@zju.edu.cn}

\author{Chaochao Chen}
\authornote{Corresponding author.}
\affiliation{%
  \institution{Zhejiang University}
  \department{Department of Computer Science}
  \city{Hangzhou}
  \state{Zhejiang}
  \country{China}}
\orcid{0000-0003-1419-964X}
\email{zjuccc@zju.edu.cn}

\author{Jiashu Qian}
\affiliation{%
  \institution{Zhejiang University}
  \department{Department of Computer Science}
  \city{Hangzhou}
  \state{Zhejiang}
  \country{China}}
\orcid{0009-0006-4141-3321}
\email{iqjs0124@gmail.com}

\author{Yao Yang}
\affiliation{%
  \institution{Zhejiang Lab}
  \city{Hangzhou}
  \state{Zhejiang}
  \country{China}}
\orcid{0000-0002-7007-9071}
\email{yangyao@zhejianglab.com}

\renewcommand{\shortauthors}{Xiaolin Zheng, Zhongyu Wang, Chaochao Chen, Jiashu Qian, \& Yao Yang}

\begin{abstract}
Building a graph neural network (GNN)-based recommender system without violating user privacy proves challenging.
Existing methods can be divided into federated GNNs and decentralized GNNs. 
But both methods have undesirable effects, i.e., low communication efficiency and privacy leakage.
This paper proposes \modelname, a novel decentralized GNN for privacy-preserving recommendations, where users can choose to publicize their interactions.
It includes three stages, i.e., graph construction, local gradient calculation, and global gradient passing.
The first stage builds a local inner-item hypergraph for each user and a global inter-user graph.
The second stage models user preference and calculates gradients on each local device.
The third stage designs a local differential privacy mechanism named secure gradient-sharing, which proves strong privacy-preserving of users' private data. 
We conduct extensive experiments on three public datasets to validate the consistent superiority of our framework.
\end{abstract}

\begin{CCSXML}
<ccs2012>
<concept>
<concept_id>10002951.10003317.10003331.10003271</concept_id>
<concept_desc>Information systems~Personalization</concept_desc>
<concept_significance>500</concept_significance>
</concept>
<concept>
<concept_id>10002978.10003022.10003026</concept_id>
<concept_desc>Security and privacy~Web application security</concept_desc>
<concept_significance>500</concept_significance>
</concept>
</ccs2012>
\end{CCSXML}

\ccsdesc[500]{Information systems~Personalization}
\ccsdesc[500]{Security and privacy~Web application security}

\keywords{Recommender system, decentralized graph neural network, privacy protection}


\maketitle

\section{Introduction}
\label{section_introduction}
Personalized recommendations have been applied to many online services, such as advertising, social media, and E-commerce, to solve the information overload problem.
Among them, graph neural network (GNN)-based recommendation methods achieve state-of-the-art performance~\cite{NGCF, LightGCN, GCMC}.
However, all these personalized recommendation methods require users' private data, e.g., interactions and ratings.
Under strict data protection regulations, such as GDPR\footnote{https://gdpr-info.eu}, online platforms is prohibited from collecting private data and hence cannot make accurate recommendations.
To achieve superior recommendation performance and strong privacy protection, building a GNN-based privacy-preserving recommender system is worth investigating.

Existing work can be classified into two types, i.e., federated GNNs~\cite{FedGNN, FeSoG, LPGNN, GWGNN} and decentralized GNNs~\cite{p2pGNN,spread_gnn, DLGNN,decentralized_federated_GNN,central_decentral_system}.
We summarize their main differences from the following three aspects: training mechanism, communication efficiency, and privacy protection.
\textbf{For training mechanism}, federated GNNs need a central server to update the model and synchronize it with all clients.
In contrast, decentralized GNNs require clients to update their models, where each client exchanges parameters with neighbors.
\textbf{For communication efficiency}, federated GNNs have high communication costs in the central server, causing a bottleneck in training speed.
In contrast, decentralized GNNs asynchronously update the models, increasing training efficiency.
\textbf{For privacy protection}, federated GNNs leverage privacy-preserving methods, which guarantee privacy-preserving strength theoretically.
In contrast, decentralized GNNs have privacy leakage problems. 
%

%

The above differences between the two methods motivate us to study a decentralized GNN for privacy-preserving recommendations.
Designing such a GNN is a non-trivial task, given the following three challenges.
\textbf{CH1: }\textit{meeting diverse demands for privacy protection}.
Existing work~\cite{FedGNN, FedREC, FeSoG} takes all user interactions as private, causing setbacks in user experience.
Actually, users are willing to publicize part of their interactions.
For example, on social media, users like to share their moments to gather friends with similar preferences.
\textbf{CH2: }\textit{protecting user privacy cost-effectively}.
Existing privacy-preserving methods, e.g., secret sharing~\cite{secret_sharing} and homomorphic encryption~\cite{homomorphic_encryption}, equip heavy mathematical schemes, causing high communication costs.
\textbf{CH3: }\textit{mitigating the trade-off between effectiveness and efficiency}.
In a decentralized GNN, involving one user for training cannot provide sufficient data to train an effective model, whereas involving too many users slows down the training process.
%

%
This paper proposes \modelname, a decentralized GNN for privacy-preserving recommendations to solve the above challenges.
To overcome \textbf{CH1}, we propose allowing users to publicize their interactions freely.
%
%
\modelname~mainly has three stages, i.e., \textbf{Stage1:} graph construction, \textbf{Stage2:} local gradient calculation, and \textbf{Stage3:} global gradient passing.
In \textbf{Stage1}, all users constitute an \textit{inter-user graph} collaboratively, and each user constructs an \textit{inner-item hypergraph} individually.
For the inter-user graph, edges come from their friendship or proximity.
For the inner-item hypergraph, nodes are the items that the user interacts with or publicizes, and edges come from item tags.
In \textbf{Stage2}, we model user preference based on the inner-item hypergraph to protect user privacy and enhance preference representation.
Specifically, we condense the item hypergraph into an interest graph by mapping items into different interests.
Given that the distilled interests can be noisy, we propose an interest attention mechanism that leverages GNN architecture to minimize noisy interests.
Finally, we pool the interests to model user preference and calculate local gradients.
In \textbf{Stage3}, each user samples multi-hop neighbors in the inter-user graph and trains the models collaboratively to overcome \textbf{CH3}.
We propose a mechanism based on the Local Differential Privacy (LDP) named secure gradient-sharing, which proves strong privacy-preserving of users' private data to overcome \textbf{CH2}.
Specifically, the calculated gradients in Stage 2 are first encoded, then propagated among the neighbourhood, and finally decoded to restore in a noise-free way.
We summarize the main contributions of this paper as follows:
%

\begin{itemize}[leftmargin=*]
\item We propose a novel decentralized GNN for privacy-preserving recommendations.
To the best of our knowledge, this is the first decentralized GNN-based recommender system.
\item We propose secure gradient-sharing, a novel privacy-preserving mechanism to publish model gradients, and theoretically prove that it is noise-free and satisfies Rényi differential privacy.
\item We conduct extensive experiments on three public datasets, and consistent superiority validates the success of the proposed framework.
\end{itemize}
\section{Related work}
\label{Related_work_in_details}
In this section, we discuss prior work on GNN and privacy protection, as we propose a decentralized GNN for privacy-preserving recommendations.
\subsection{GNN} 
Recent studies have extensively investigated the mechanisms and applications of GNN.
For mechanisms, we investigate graph convolutional networks and graph pooling. 
For applications, we investigate centralized GNNs, federated GNNs, and decentralized GNNs.

\nosection{Graph convolutional networks}
Most graph convolutional networks focus on pairwise relationships between the nodes in a graph~\cite{GCN, graphsage}.
However, they neglect the higher-order connectivity patterns beyond the pairwise relationships~\cite{hypergraph, hypergraph_define2}.
\cite{first_introduce_hypergraph} first introduces hypergraph learning as a propagation process to minimize label differences among nodes.
HGNN~\cite{first_hypergcn} proposes a hypergraph neural network by truncated Chebyshev polynomials of the hypergraph Laplacian.
HyperGNN~\cite{HyperGNN} further enhances the capacity of representation learning by leveraging an optional attention module.
Our paper leverages HyperGNN without an attention module to aggregate node features and learn node assignment as we model a user's preferences based on his item hypergraph.

\nosection{Graph Pooling}
Graph pooling is widely adopted to calculate the entire representation of a graph.
Conventional approaches sum or average node embeddings~\cite{gnn_summing_pooling, gnn_topk_pooling, gnn_topk_pooling2}.
However, these methods cannot learn the hierarchical representation of a graph.
To address this issue, DiffPool~\cite{Differential_Pool} first proposes an end-to-end pooling operator, which learns a differentiable soft cluster assignment for nodes at each layer.
MinCutPool~\cite{min_cut_pool} further formulates a continuous relaxation of the normalized Min-cut problem to compute the cluster assignment.
Motivated by DiffPool, our paper designs a different constraint on cluster assignment: one node can be mapped into multiple clusters, but each cluster should be independent.

\nosection{Centralized GNNs}
Centralized GNNs have been vastly applied in recommendation scenarios.
Leveraging a graph structure can enhance representations for users and items~\cite{NGCF} and capture complex user preferences behind their interactions~\cite{Sequence_GNN}.
PinSage~\cite{pinsage} and LightGCN~\cite{LightGCN} refine user and item representations via multi-hop neighbors' information.
%
%
To deal with noisy interactions and node degree bias, SGL\cite{SGL} explores self-supervised learning on a user-item graph and reinforces node representation via self-discrimination.
Although these methods achieve superior performance, they cannot be applied in privacy-preserving recommendation scenarios as most user interactions are private.
In contrast, our paper designs a decentralized GNN for privacy-preserving recommendations.

\nosection{Federated GNNs}
Federated GNNs involve a central server to orchestrate the training process and leverage an LDP mechanism~\cite{DP_current_Fed} to protect user privacy.
LPGNN\cite{LPGNN} and GWGNN\cite{GWGNN} require clients to publish local features to the server, where feature aggregation is executed.
However, these two methods violate the privacy-preserving requirements in recommendation scenarios: local interactions should not be published.
FedGNN~\cite{FedGNN} assumes a trustworthy third party to share user-item interactions securely and adds Gaussian noises to protect user privacy.
However, finding a credible third party in real scenarios is difficult and hence limits the application of this method.
FeSoG~\cite{FeSoG} shares user features to make recommendations and adds Laplacian noises to preserve users' privacy.
However, user features are sensitive data as malicious participants can leverage these data to infer private user interactions~\cite{Prisvr}.
Federated GNNs have high communication costs in the central server, causing training speed bottlenecks.
In contrast, our method adopts a decentralized training mechanism to improve training efficiency, where each client updates their model asynchronously.

\nosection{Decentralized GNNs}
Decentralized GNNs require clients to cooperate in training prediction models.
SpreadGNN~\cite{spread_gnn} and DLGNN~\cite{DLGNN} calculate model gradients with local interactions and share gradients among clients to update models.
D-FedGNN~\cite{decentralized_federated_GNN} introduces Diffie-hellman key exchange to secure communication and utilizes decentralized stochastic gradient descent algorithm~\cite{DPSGD} to train models.
P2PGNN~\cite{p2pGNN} makes local predictions and uses page rank to diffuse predictions.
All these methods exchange gradients without using a privacy-preserving mechanism.
Consequently, malicious neighbors can infer users' private data from received gradients.
In contrast, our method proposes an LDP mechanism to secure shared gradients, protecting user privacy.

\begin{figure*}[t]
\centering
\includegraphics[width=0.85\linewidth]{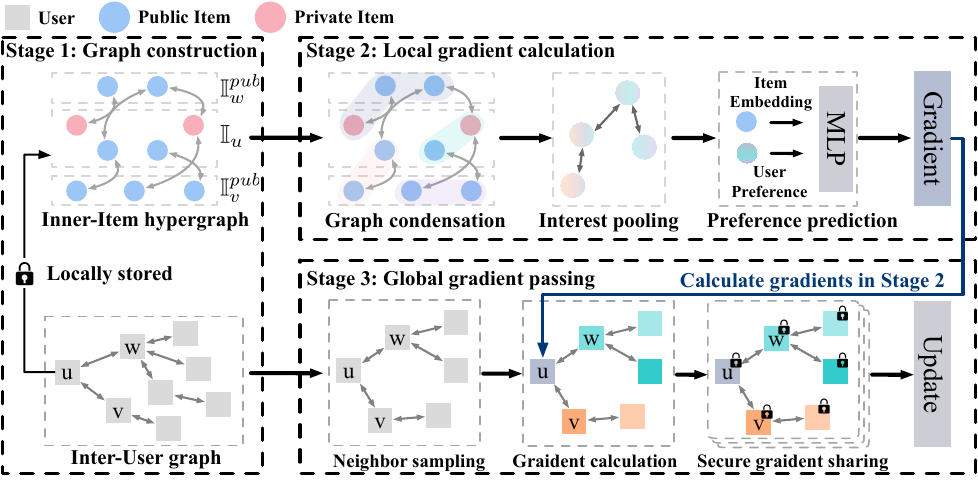}
\caption{
The framework of our proposed \modelname.
Here, we show the training process of user $u$.
After neighbor sampling, all sampled users calculate gradients in Stage 2 simultaneously.
}
\label{figure_framework_overall_architecture}
\end{figure*}

\subsection{Privacy-preserving mechanisms}
Preserving users' privacy is a topic with increasing attention, including the Differential Privacy (DP)~\cite{dp0, dp1, renyi_differential_privacy,chen2022differential,DBLP:conf/aaai/LiaoLZY023}, Homomorphic Encryption~\cite{HE1, HE2, HE3}, and Secure Multi-party Computation~\cite{MPC1, MPC2}.
Among them, Local Differential Privacy, an implementation of DP, has been embraced by numerous machine-learning scenarios~\cite{renyi_differential_privacy}, where noise is added to individual data before it is centralized in a server.
\cite{dp0}~gives a standard definition of $\epsilon$-DP, where $\epsilon$ is the privacy budget. 
\cite{dp1}~further defines $(\epsilon,\delta)$-dp to give a relaxation of $\epsilon$-DP, where $\delta$ is an additive term and $1-\delta$ represents the probability to guarantee privacy.
To the best of our knowledge, most existing privacy-preserving GNN-based recommender systems~\cite{FedGNN, FeSoG} define their privacy protection based on $(\epsilon,\delta)$-DP.
However, $(\epsilon, \delta)$-DP has an explosive privacy budget for iterative algorithms, and hence it cannot guarantee privacy protection for multiple training steps.
To solve this problem, we take $(\alpha, \epsilon)$-RDP~\cite{renyi_differential_privacy} to guarantee tighter bounds on the privacy budget for multiple training steps.
Our work proves that the secure gradient-sharing mechanism satisfies $(1.5, \epsilon)$-RDP, guaranteeing stronger privacy protection in real scenarios.
\section{Method}
\label{section_method}
As seen in Figure~\ref{figure_framework_overall_architecture}, the framework of \modelname~consists of three stages, i.e., graph construction, local gradient calculation, and global gradient passing.
The first stage constructs an inter-user graph and an inner-item hypergraph.
The second stage models user preference and calculates local gradients. 
The third stage proposes a novel privacy-preserving mechanism named secure gradient-sharing.
Finally, users update their local models with the decoded gradients.
\modelname~achieves competitive recommendation performance and meets diverse demands for privacy protection.

\subsection{Preliminary}
\nosection{Rényi differential privacy (RDP)}
A randomized algorithm $M$ satisfies $(\alpha, \epsilon)$-Rényi differential privacy~\cite{renyi_differential_privacy}, if for any adjacent vectors $x$ and $x^{\prime}$, it holds that
$
D_{\alpha}(M(x)\parallel M(x^{\prime})) \le \epsilon$, with $\epsilon$ denoting the privacy budget and $\alpha$ denoting the order of Rényi divergence.
Small values of $\epsilon$ guarantee higher levels of privacy, whereas large values of $\epsilon$ guarantee lower levels of privacy.
We can define RDP for any $\alpha \geq 1$.

\nosection{Problem formulation}
Assume we have a set of users and items, denoted by $\mathbb{U}$ and $\mathbb{I}$.
Generally, user $u$ has a set of neighbor users $\mathcal{N}_{u}\mathrm{:}\{u_{1}, u_{2}, \cdots, u_{m}\}$ and interacted items $\mathbb{I}_{u}\mathrm{:}\{i_{1}, i_{2}, \cdots, i_{n}\}$,  with $m$ and $n$ denoting the number of users and items.
User $u$ can decide whether to publicize their interactions, and we let $\mathbb{I}^{pub}_{u}$ denote his publicized interactions.
The decentralized privacy-preserving recommendation aims to recommend items matching user preferences by leveraging the user's interactions and his neighbors' publicized interactions. 


\subsection{Graph construction}
\label{section_building_graph}
In the first stage, we construct the inter-user graph and inner-item hypergraph. 
The inter-user graph will support neighbor sampling and secure gradient-sharing in the third stage.
Each user trains his recommendation model collaboratively with his multi-hop neighbors, who are more likely to share similar preferences.
The inner-item hypergraph will support modeling user preference in the second stage.

\nosection{Inter-user graph construction}
Recent studies demonstrate that user friendship or proximity reveals behavior similarity between users~\cite{DMF, FeSoG}.
Driven by this motivation, we construct a global inter-user graph, utilizing the communication protocol as described in reference~\cite{p2pGNN}. 
As delineated in "Section B: Communication Protocol" of the reference, the implemented protocol demonstrates proficiency in handling both static and dynamic situations.
Each user can anonymously disseminate messages among multi-hop neighbors by communicating with his 1-hop neighbors.
%
The inter-user graph is defined as $\mathcal{G}=\{\mathcal{V}, \mathcal{E}\}$, where $\mathcal{V}$ denotes the user set and $\mathcal{E}$ denotes the edge set. 
We set $\mathcal{N}_{u} = \{v:(u,v) \in \mathcal{E}\}$ to denote the neighbors of user $u$.

\nosection{Inner-item hypergraph construction}
To make privacy-preserving recommendations, we model user preferences based on limited interactions, as users' features and most interactions are private.
By representing user interactions as a graph, it is easier to distinguish his core and temporary interests, as core interests result in frequent and similar interactions.
Here, we build an item hypergraph for each user, where a hyperedge connects multiple items~\cite{hypergraph}.
To stabilize model training, we further leverage tag information to establish item relationships.

The inner-item hypergraph for user $u$ is defined as $\mathbb{G}_{u}=\{\mathbb{V}_{u},\mathbb{A}_{u}\}$, where $\mathbb{V}_{u}=\{i \mid i \in \mathbb{I}_{u} \cup \mathbb{I}^{pub}_{v},\,v \in \mathcal{N}_{u}\}$ denotes the item set and $\mathbb{A}_{u}$ is the incidence matrix.
Each entry $\mathbb{A}_{u}(i, t)$ indicates whether item $i \in \mathbb{V}_{u}$ is connected by a hyperedge $t \in \mathbb{T}$.
In our setting, $\mathbb{T}$ corresponds to a tag set.
We set $\mathbb{A}_{u}(i,t) = 1$ if item $i$ has tag $t$ and $\mathbb{A}_{u}(i,t) = 0$ otherwise.
We let $deg(i)$ denote the degree of item $i$ and $deg(t)$ denote the degree of hyperedge $t$, where $deg(i)=\sum_{t \in \mathbb{T}}\mathbb{A}_{u}(i, t)$ and $deg(t)=\sum_{i \in \mathbb{V}_{u}}\mathbb{A}_{u}(i, t)$.
We let $D_{v} \in \mathbb{R}^{|\mathbb{V}_{u}|\times|\mathbb{V}_{u}|}$ and $D_{t} \in \mathbb{R}^{|\mathbb{T}|\times|\mathbb{T}|}$ denote diagonal matrices of the item and hyperedge degrees, respectively.

\subsection{Local gradient calculation}
\label{section_local_calculation}
In the second stage, we model each user's preference for an item by his inner-item hypergraph and calculate gradients based on ground-truth interactions.

\nosection{Modelling user preference}
We model user preference through three steps, i.e., graph condensation, interest pooling, and preference prediction.

\nosection{(1) Graph condensation}
One user interaction provides weak signals for his preferences.
To model user preferences, an intuitive way is to gather these weak signals into strong ones.  
Motivated by DiffPool\cite{Differential_Pool}, we condense his item hypergraph into an interest graph by learning a soft cluster assignment.

We first aggregate features among items and learn an item assignment by hypergraph convolutional networks.
We define the aggregation and assignation process as,
\begin{equation}
\begin{aligned}
E^{\prime}_{u} &= D^{-1/2}_{v}\mathbb{A}_{u}D^{-1}_{t}\mathbb{A}_{u}^{T}D^{-1/2}_{v}E_{u}W_{1},\\ 
S_{u}&=\mathrm{softmax}(D^{-1/2}_{v}\mathbb{A}_{u}D^{-1}_{t}\mathbb{A}_{u}^{T}D^{-1/2}_{v}E_{u}W_{2}),\nonumber
\end{aligned}
\end{equation}
%
%
where $E^{\prime}_{u} \in \mathbb{R}^{|V_{u}|\times d_{i}}$ denotes the aggregated features,
$S_{u} \in \mathbb{R}^{|V_{u}|\times n_{i}}$ determines the probabilities of mapping items to different interests,
and $E_{u} =\{e_{i} \mid i \in \mathbb{V}_{u}\}$ denotes item embeddings with $e_{i} \in R^{d}$.
$d_{i}$ denotes interest dimension, $n_{i}$ denotes interest number, and $d$ denotes embedding dimension.
$W_{1} \in \mathbb{R}^{d \times d_{i}}$ and  $W_{2} \in \mathbb{R}^{d \times n_{i}}$ are trainable transformation matrices.
%

%
We then condense the item hypergraph into an interest graph and define the condensation process as, 
\begin{equation}
\begin{aligned}
\vec{E}_{u} &= S_{u}^{T}\cdot{E}^{\prime}_{u},\\ \vec{\mathbb{A}}_{u} &= S_{u}^{T}\cdot\mathbb{A}_{u}\cdot S_{u}, \nonumber
\end{aligned}
\end{equation}
where $\vec{E}_{u} \in \mathbb{R}^{|n_{i}|\times d_{i}}$ denotes the interest embedding.
$\vec{\mathbb{A}}_{u} \in \mathbb{R}^{|n_{i}| \times |n_{i}|}$ denotes the adjacency matrix of the interest graph, with each entry $\vec{\mathbb{A}}_{u}(i,j)$ denoting the connectivity strength between interest $i$ and interest $j$.
We continue to let $\mathbb{N}_{i}$ denote the neighbors of interest $i$ to simplify the description.

Different from DiffPool~\cite{Differential_Pool}, our approach assigns an interacted item to multiple interests, as one interaction may be driven by different interests.
However, one problem remains unsolved: the interests may be redundant, i.e., an interest can be represented by a combination of other interests, decreasing convergence speed.
To solve this problem, we leverage Pearson product-moment correlation (Pearson loss)~\cite{correlation}, encouraging the interests to be independent.
We define Pearson loss $\mathcal{L}^{d}_{u}$ as,
\begin{equation}
    \mathcal{L}^{d}_{u} = \frac{1}{n^{2}_{i}}\sum^{n_{i}}_{j=0}\sum^{n_{i}}_{k=0}cov(\vec{E}_{u,j}, \vec{E}_{u,k}) / \sqrt{var(\vec{E}_{u,j})var(\vec{E}_{u,k})},
\label{equation_pearson_loss}
\end{equation}
where $\vec{E}_{u,j}$ denotes the embedding of interest $j$,
$cov(\cdot,\cdot)$ denotes the distance covariance between two interest embeddings, and $var(\cdot)$ denotes the distance variance.

\nosection{(2) Interest pooling}
Some interests are noisy and hence cannot represent user preference, resulting from two aspects.
(1) Noisy behaviors commonly exist in user interactions, e.g., random clicks~\cite{NGCF}.
(2) Neighbor users do not constantly share the same interests.
Thus, we need to limit the weight of noisy interests before pooling them to represent user preference.

Here, we propose a new attention mechanism, named \textit{interest attention}, under a GNN architecture.
Our motivation is that noisy interests are separate from normal interests.
We can use neighbor interests to measure the weight of an interest.
Although some attention mechanisms~\cite{attention, soft_attention} are commonly adopted to achieve this goal, they neglect the relationships between the interests in the interest graph.
We define the process of interest attention as,
\begin{equation}
\begin{aligned}
e_{u} &= \mathrm{softmax}(\vec{P}_{u}^{T})\vec{E}_{u}, \\ \vec{P}_{u,i} &= \vec{E}_{u,i}W_{3} + \sum_{j\in \mathbb{N}_{i}}\vec{\mathbb{A}}_{u}(i,j)(\vec{E}_{u,i}W_{4}-\vec{E}_{u,j}W_{5}), \nonumber
\end{aligned}
\end{equation}
where $e_{u}$ denotes the user preference and $\vec{P}_{u} \in \mathbb{R}^{n_{i} \times 1}$ denotes interest weights.
$W_{3}$, $W_{4}$, and $W_{5}$ are trainable transformation matrices with dimension $\mathbb{R}^{d_{i}\times 1}$.

\nosection{(3) Preference prediction}
We concatenate user preference and item embedding ($e_{u}$ and $e_{i}$) and feed them into a multi-layer perceptron (MLP) to estimate the probability of user $u$ interacting with item $i$.
We define preference prediction as $\hat{y}_{u, i} = \mathrm{MLP}(e_{u} \parallel e_{i})$, where $\parallel$ denotes the concatenation function. 

\nosection{Gradient calculation}
We take Bayesian personalized ranking (BPR) loss~\cite{BPRLoss} as the loss function to learn model parameters. 
It assumes that the observed interactions should be assigned higher prediction values than unobserved ones.
The BPR loss $\mathcal{L}_{u}^{p}$ is calculated as,
\begin{equation}
    \mathcal{L}^{p}_{u} = \frac{1}{|\mathbb{I}_{u}|}\sum_{i \in \mathbb{I}_{u}, j \in \mathbb{I}^{-}_{u}}-\mathrm{ln}\sigma(\hat{y}_{u,i} - \hat{y}_{u,j}),
\label{equation_training_loss}
\end{equation}
where $\mathbb{I}^{-}_{u}$ denotes the unobserved interactions of user $u$ and $\sigma$ denotes the sigmoid function.

Finally, we calculate local loss $\mathcal{L}_{u}$ for user $u$ as,
\begin{equation}
\mathcal{L}_{u} = \mathcal{L}^{p}_{u} + \mathcal{L}_{u}^{d} + \lambda||\Theta_{u}||_{2},
\label{equation_local_Loss}
\end{equation}
where $\lambda$ is the weight of L2 regularization and $\Theta_{u}$ denotes trainable parameters of the recommendation model for user $u$.
\subsection{Global gradient passing}
\label{subsection_global_gradient_passing}
In the third stage, we propose neighbor sampling and secure gradient-sharing based on the inter-user graph to enhance model performance and training efficiency in a privacy-preserving way.

\nosection{Neighbor sampling}
An intuitive way of decentralized recommendations is that users collaboratively train models with their neighbors.
However, two problems remain unresolved, damaging recommendation performance:
(1) Some item embeddings are not fully trained or even not trained, as interacted items are sparse in recommendation scenarios.
(2) Different users have different convergence speeds.
For example, popular users are more likely to be involved in training.
Inadequately-trained models can bias fully-trained models by shared gradients.
Thus, we design a sampling strategy to upgrade the neighbor-based collaborative training into the neighborhood-based one.

We describe a neighbor sampling strategy for user $u \in \mathcal{V}$ in Algorithm~\ref{algorithm_neighborhood_calculation}.
%
%
%
%
The probability for user $u$ to sample his neighbor $v \in \mathcal{N}_{u}$ is defined as,
\begin{equation}
p_{u,v} = \frac{\mathcal{L}_{v} / (\mathrm{ln}(cnt_{v} + 1) + 1)}{\sum_{w \in \mathcal{N}_{u}} \mathcal{L}_{w} / (\mathrm{ln}(cnt_{w} + 1) + 1)},
\label{equation_sample_probability}
\end{equation}
where $\mathcal{L}_{v}$ is the training loss for user $v$ as calculated in Equation~\eqref{equation_local_Loss}.
We let $cnt_{v}$ denote the training iterations for user $v$.
The sampled probability increases with local training loss and decreases with training iterations.
Thus, underfit models are more likely to be sampled for training.

We employ $U$ to denote all sampled users, i.e., $U=\{v | size(\mathcal{N}_{v}^{s}) > 0\}$.
Each sampled user $v \in U$ simultaneously calculates his local gradients $g_{v}$ based on the loss defined in Equation~\eqref{equation_local_Loss}.

\nosection{Secure gradient-sharing}
Inspired by the one-bit encoder~\cite{motivation_one_bit_encoder}, we propose a novel privacy-preserving mechanism to share gradients in the neighborhood efficiently.
Unlike the one-bit encoder, our mechanism shares the calculated gradients multiple times and satisfies RDP.
The secure gradient-sharing involves three steps, i.e., gradient encoding, gradient propagation, and gradient decoding.

\nosection{(1) Gradient encoding} 
We first encode the calculated gradients to protect user privacy and minimize communication costs.
In the training neighborhood, each user clips his local gradients into $\left[-\delta, \delta\right]$ and takes the Bernoulli distribution to sample the mapped encoding for gradients.
Let $\beta$ denote the perturbation strength to protect user privacy.
We define the gradient encoding for user $u$ as,
\begin{equation}
    g_{u}^{*} \sim 2*\mathrm{Bernoulli}(\frac{1}{e^{\beta} + 1} + \frac{(e^{\beta} - 1)(\mathrm{clip}(g_{u}, \delta) + \delta)}{2(e^{\beta} + 1)\delta}) - 1,
\label{equation_gradient_encoder} 
\end{equation}
where $g_{u}^{*}$ denotes the encoded gradients of user $u$ with each element either equalling to $-1$ or $1$.
Large gradients are more likely to be mapped to $1$, whereas small gradients are more likely to be mapped to $0$.
\begin{Theorem}
For a gradient with a size of $n_{s}$, our secure gradient-sharing mechanism satisfies $(1.5, \epsilon)$-RDP, where $\epsilon = 2n_{s}log(\frac{1.5\pi \delta (e^{\beta} + 1)}{2(e^{\beta} - 1)}\\ + \frac{e^{-0.5\beta} + e^{1.5\beta}}{e^{\beta} + 1})$.
\label{theorem_gradient_encoding_rdp}
\end{Theorem}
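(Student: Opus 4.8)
Since propagation and decoding act as post\nobreakdash-processing of the encoded vector $g_u^{*}$, and clipping is a deterministic map applied before encoding, the RDP of the whole secure gradient\nobreakdash-sharing mechanism equals that of the encoder in Equation~\eqref{equation_gradient_encoder}; the plan is therefore to analyze that encoder, reduce to a one\nobreakdash-dimensional estimate, and tensorize over the $n_s$ coordinates. Conditioned on the input gradient $g_u$, the encoder draws the $n_s$ entries of $g_u^{*}$ independently, so the output law is a product of $n_s$ two\nobreakdash-point measures on $\{-1,1\}$; for adjacent gradients $g,g'$ the $k$\nobreakdash-th coordinates enter only through $x_k=\mathrm{clip}(g_k,\delta)\in[-\delta,\delta]$ and $x'_k=\mathrm{clip}(g'_k,\delta)\in[-\delta,\delta]$, and the $k$\nobreakdash-th output is an affine ($\pm1$) image of $\mathrm{Bernoulli}(p(x_k))$ with $p(x)=\frac{1}{e^{\beta}+1}+\frac{(e^{\beta}-1)(x+\delta)}{2(e^{\beta}+1)\delta}$. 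By additivity of Rényi divergence over product distributions, $D_{3/2}\!\left(M(g)\,\|\,M(g')\right)=\sum_{k=1}^{n_s}D_{3/2}\!\left(\mathrm{Bernoulli}(p(x_k))\,\|\,\mathrm{Bernoulli}(p(x'_k))\right)$, so it suffices to bound a single summand by $2\log\!\left(\frac{1.5\pi\delta(e^{\beta}+1)}{2(e^{\beta}-1)}+\frac{e^{-0.5\beta}+e^{1.5\beta}}{e^{\beta}+1}\right)$ and multiply by $n_s$.

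For the per\nobreakdash-coordinate step, note that $p(\cdot)$ is affine and increasing, so as $x$ runs over $[-\delta,\delta]$ its value sweeps the interval $[a,b]$ with $a=\frac{1}{e^{\beta}+1}$, $b=\frac{e^{\beta}}{e^{\beta}+1}$ (so $a+b=1$ and $b/a=e^{\beta}$). Using the closed form $D_{3/2}(\mathrm{Bernoulli}(p)\,\|\,\mathrm{Bernoulli}(q))=2\log\!\left(p^{3/2}q^{-1/2}+(1-p)^{3/2}(1-q)^{-1/2}\right)$, one maximizes the argument of the logarithm over $(p,q)\in[a,b]^2$: this function equals $1$ on the diagonal, and with either variable fixed the stationarity condition reduces to $p=q$, which is the minimum, so the maximum is at a corner; the worst corner $(p,q)=(b,a)$ gives $be^{\beta/2}+ae^{-\beta/2}=\frac{e^{1.5\beta}+e^{-0.5\beta}}{e^{\beta}+1}$, which is exactly the second summand of the claimed $\epsilon$. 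The remaining summand $\frac{1.5\pi\delta(e^{\beta}+1)}{2(e^{\beta}-1)}$ I expect to come from replacing this sharp value by a more conservative estimate that also absorbs the cost of sharing the same encoded gradient repeatedly — most plausibly a second\nobreakdash-order control of $D_{3/2}$ near the diagonal whose curvature is governed by a $\chi^2$\nobreakdash-type term proportional to $(p-q)^2/(q(1-q))$, with $|p-q|$ bounded by the slope of $p(\cdot)$ times $2\delta$, i.e.\ by $\frac{e^{\beta}-1}{e^{\beta}+1}$; a trigonometric substitution used to fold such a bound into a single logarithm is the natural origin of the factor $\pi$. Monotonicity of $\log$ then closes the per\nobreakdash-coordinate estimate.

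Summing the $n_s$ identical per\nobreakdash-coordinate bounds yields $D_{3/2}(M(g)\,\|\,M(g'))\le 2n_s\log\!\left(\frac{1.5\pi\delta(e^{\beta}+1)}{2(e^{\beta}-1)}+\frac{e^{-0.5\beta}+e^{1.5\beta}}{e^{\beta}+1}\right)=\epsilon$ for every pair of adjacent gradients, which is precisely $(1.5,\epsilon)$\nobreakdash-RDP in the sense recalled in the preliminaries. I expect the genuine obstacle to be this per\nobreakdash-coordinate divergence bound: the clean corner maximization alone recovers only the second term, so the real content lies in pinning down and rigorously justifying the conservative analytic estimate that produces the $\pi$\nobreakdash-term (and confirming it still holds under repeated sharing). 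Checking that clipping and the $\{-1,1\}$ relabeling leave the divergence unchanged, that the coordinates are conditionally independent, and that propagation/decoding are post\nobreakdash-processing, together with the final tensorization and algebraic substitution, are all routine.
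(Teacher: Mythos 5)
Your argument is correct, and in fact sharper than the paper's, but it follows a genuinely different route. You treat the encoder's output as a product of $n_s$ two-point laws, use additivity of R\'enyi divergence, the closed form $D_{3/2}(\mathrm{Bernoulli}(p)\,\|\,\mathrm{Bernoulli}(q))=2\log\bigl(p^{3/2}q^{-1/2}+(1-p)^{3/2}(1-q)^{-1/2}\bigr)$, and a corner maximization over $[a,b]^2$ with $a=\frac{1}{e^{\beta}+1}$, $b=\frac{e^{\beta}}{e^{\beta}+1}$; since the resulting sharp per-coordinate value $\frac{e^{1.5\beta}+e^{-0.5\beta}}{e^{\beta}+1}$ is already below the argument of the logarithm in the stated $\epsilon$, monotonicity of $\log$ plus tensorization finishes the proof. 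Consequently the ``remaining obstacle'' you flag --- rigorously producing the $\frac{1.5\pi\delta(e^{\beta}+1)}{2(e^{\beta}-1)}$ term --- is not needed for your argument at all; that term only loosens the bound, and your speculation about its origin (a second-order $\chi^2$-type control absorbing repeated sharing) is not how it arises. The paper instead regards $p(g)=\mathrm{P}[\mathcal{M}(g)=1]$ and $q(g)=\mathrm{P}[\mathcal{M}(g)=-1]$ as functions of the clipped input and evaluates $\frac{1}{\alpha-1}\sum_{i}\log\int_{-\infty}^{\infty}p(g_i)^{\alpha}q(g_i)^{1-\alpha}\,\mathrm{d}g_i$, splitting the integral at $\pm\delta$: the two tails contribute exactly your corner value $\frac{e^{-0.5\beta}+e^{1.5\beta}}{e^{\beta}+1}$, while the middle interval is transformed by $t=(x+y)/2y$ into the Beta function $\mathcal{B}(\alpha+1,2-\alpha)$ and evaluated with the gamma reflection formula at $\alpha=1.5$, which is the true source of the $\pi\delta$ term; repeated forwarding of the same encoded realization is, as you correctly note, pure post-processing and incurs no extra cost. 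Your decomposition is the more standard treatment of a randomized-response-style mechanism (the output distribution really is a finite product, so the divergence is a finite sum rather than an integral over the input space) and yields a strictly tighter $\epsilon$; the paper's integral formulation is what generates the particular constant quoted in the theorem statement.
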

\begin{proof}
Let $\mathcal{M}(g)$ be the secure gradient-sharing as described in Equation~\eqref{equation_gradient_encoder}.
We need to show that for any two input gradients $g$ and $g^{\prime}$, we have $D_{\alpha}(M(g)\parallel M(g^{\prime})) \le \epsilon$.
We set $z=e^{\beta}+1$ to simplify the description.
For two continuous distributions defined over the real interval with densities $p$ and $q$, we have
\begin{equation}
\begin{aligned}
    p(g) &= \mathrm{P}[\mathcal{M}(g)=1] = \frac{\delta z + (z-2)g}{2\delta z}, \\
    q(g) &= \mathrm{P}[\mathcal{M}(g)=-1] = \frac{\delta z - (z-2)g}{2\delta z}. \nonumber
\end{aligned}
\end{equation}
Given that we clip gradients $g$ into $\left[-\delta, \delta \right]$, we evaluate $D_{\alpha}(M(g)\parallel M(g^{\prime}))$ separately over three intervals, i.e., $\left(-\infty, -\delta \right)$, $\left[-\delta, \delta \right]$, and $\left(\delta, +\infty \right)$.
Let $n_{s}$ denote the size of the gradient.
We have
\begin{equation}
\begin{aligned}
D_{\alpha}(M(g)\parallel M(g^{\prime})) 
&= \frac{1}{\alpha - 1} \sum \limits_{i=0}^{ns} log\int_{-\infty}^{\infty} p(g_{i})^{\alpha}q(g_{i})^{1-\alpha}\mathrm{d} g_{i}.\nonumber
\end{aligned}
\end{equation}
For intervals $\left(-\infty, -\delta \right)$ and $\left(\delta, +\infty \right)$, we have
\begin{equation}
\begin{aligned}
\int_{-\infty}^{-\delta}p(g)^{\alpha}q(g)^{1-\alpha}\dif g &= \frac{(z-1)^{1-\alpha}}{z},\\
\int_{\delta}^{\infty}p(g)^{\alpha}q(g)^{1-\alpha}\dif g &= \frac{(z-1)^{\alpha}}{z} \nonumber.
\end{aligned}
\end{equation}
For intervals $\left[-\delta, \delta \right]$, we set $x=(z-2)g$ and $y=\delta z$ and have
\begin{small}
\begin{equation}
\begin{aligned}
\int_{-\delta}^{\delta}p(g)^{\alpha}q(g)^{1-\alpha}\dif g &= \frac{1}{2\delta z(z-2)}\int_{-y + 2\delta}^{y - 2\delta}(x+y)^{\alpha}(-x+y)^{1-\alpha} \dif x  \\
&\le \frac{1}{2\delta z(z-2)}\int_{-y}^{y}(x+y)^{\alpha}(-x+y)^{1-\alpha} \dif x \nonumber.
\end{aligned}
\end{equation}
\end{small}
Given the reverse chain rule, we set $t=(x+y)/2y$ and have 
\begin{equation}
\begin{aligned}
\int_{-\delta}^{\delta}p(g)^{\alpha}q(g)^{1-\alpha}\dif g &= \frac{1}{z-2}\int_{0}^{1}(2yt)^{\alpha}(-2yt+2y)^{1-\alpha} \dif t \\
&= \frac{2\delta z}{z-2}\int_{0}^{1}(t)^{\alpha}(-t+1)^{1-\alpha} \dif t \nonumber.
\end{aligned}
\end{equation}
According to the definition of beta and gamma functions, we have
\begin{equation}
\begin{aligned}
\int_{-\delta}^{\delta}p(g)^{\alpha}q(g)^{1-\alpha}\dif g &= \frac{2\delta z}{z-2}\mathcal{B}(\alpha+1, 2-\alpha) \\
&= \frac{2\delta z}{z-2}\frac{\Gamma(\alpha+1)\Gamma(2-\alpha)}{\Gamma(3)} \\
&= \frac{2\delta z\alpha(\alpha-1)}{(z-2)}\frac{\Gamma(\alpha-1)\Gamma(1-(\alpha-1 ))}{\Gamma(3)} \nonumber.
\end{aligned}
\end{equation}
By the reflection formula for the gamma function, we have 
\begin{equation}
\begin{aligned}
\int_{-\delta}^{\delta}p(g)^{\alpha}q(g)^{1-\alpha}\dif g &= \frac{2\delta z\alpha(\alpha-1)\pi}{(z-2)\Gamma(3)sin((\alpha-1) \pi)}\nonumber.
\end{aligned}
\end{equation}
To calculate $D_{\alpha}(M(g)\parallel M(g^{\prime}))$, here we set $\alpha = 1.5$ and have
\begin{small}
\begin{equation}
\begin{aligned}
D_{\alpha}(M(g)\parallel M(g^{\prime})) = 2n_{s}log(\frac{1.5\pi \delta (e^{\beta} + 1)}{2(e^{\beta} - 1)} + \frac{e^{-0.5\beta} + e^{1.5\beta}}{e^{\beta} + 1}) \le \epsilon \nonumber,
\end{aligned}
\end{equation}
\end{small}
which concludes the proof of $(1.5, \epsilon)$-RDP.
\end{proof}

\begin{Theorem}
Given $T$ training iterations, the privacy budget of the secure gradient-sharing mechanism is bounded.
\label{theorem_gradient_encoding_privacy_budget}
\end{Theorem}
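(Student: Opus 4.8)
The plan is to reduce Theorem~\ref{theorem_gradient_encoding_privacy_budget} to Theorem~\ref{theorem_gradient_encoding_rdp} via the sequential (adaptive) composition property of Rényi differential privacy. First I would observe that the secure gradient-sharing mechanism is invoked once per training iteration, so after $T$ iterations the overall released object is the $T$-fold adaptive composition $\mathcal{M}^{(T)} = \mathcal{M}_T \circ \cdots \circ \mathcal{M}_1$, where each $\mathcal{M}_t$ encodes the gradient computed at step $t$ from the current model parameters. Because the clipping range $[-\delta,\delta]$ and the perturbation strength $\beta$ are fixed across iterations, Theorem~\ref{theorem_gradient_encoding_rdp} applies verbatim to every $\mathcal{M}_t$: each satisfies $(1.5,\epsilon)$-RDP with the \emph{same} order $\alpha = 1.5$ and the same budget $\epsilon = 2 n_s \log(\cdots)$. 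I would also note here that within a single iteration the realized vector $g_u^{*}$ is forwarded to several neighbors, but this repeated release is deterministic post-processing of one draw of $\mathcal{M}_t$, so by the post-processing invariance of RDP it incurs no additional privacy loss beyond the per-round $(1.5,\epsilon)$ guarantee.

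Second, I would apply the RDP composition theorem~\cite{renyi_differential_privacy}: if $\mathcal{M}_1,\dots,\mathcal{M}_T$ each satisfy $(\alpha,\epsilon_t)$-RDP, their adaptive composition satisfies $(\alpha,\sum_{t=1}^{T}\epsilon_t)$-RDP. Since all $\epsilon_t$ equal $\epsilon$, $\mathcal{M}^{(T)}$ satisfies $(1.5, T\epsilon)$-RDP, a finite quantity for every fixed $T$ that grows only linearly in the number of iterations. To express this in the more familiar $(\epsilon',\delta)$-DP language I would then invoke the standard RDP-to-DP conversion: an $(\alpha,\epsilon_{\mathrm{rdp}})$-RDP mechanism is $(\epsilon_{\mathrm{rdp}} + \frac{\log(1/\delta)}{\alpha-1},\,\delta)$-DP for every $\delta \in (0,1)$; with $\alpha = 1.5$ this yields that $\mathcal{M}^{(T)}$ is $(T\epsilon + 2\log(1/\delta),\,\delta)$-DP. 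The crucial point, and the precise sense in which the budget is ``bounded'', is that $\delta$ does not accumulate over iterations, in sharp contrast to the naive $(\epsilon,\delta)$-DP composition where both terms explode; only the $\epsilon$ part grows, and it does so at the controllable linear rate $T\epsilon$.

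The main obstacle I anticipate is not the arithmetic but justifying that the adaptive form of the composition theorem is the one being used: the model parameters, and hence the gradient fed to $\mathcal{M}_{t+1}$, depend on the outputs of $\mathcal{M}_1,\dots,\mathcal{M}_t$, so one must cite the adaptive version of~\cite{renyi_differential_privacy} rather than a simpler product-distribution statement, and verify its hypotheses (each $\mathcal{M}_t$ is $(\alpha,\epsilon)$-RDP conditioned on any prefix of outputs, which holds because the per-round bound of Theorem~\ref{theorem_gradient_encoding_rdp} is worst-case over the input gradient). A secondary point to close cleanly is to make the practical smallness of $T\epsilon$ explicit: keeping $\epsilon$ itself moderate requires $\beta$ not too large and $\delta$ chosen so that $\frac{1.5\pi\delta(e^\beta+1)}{2(e^\beta-1)}$ stays near a constant, after which $T\epsilon$ is a finite, tunable privacy budget, completing the argument.
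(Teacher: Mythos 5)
Your proposal follows essentially the same route as the paper: apply sequential composition of RDP to the per-iteration $(1.5,\epsilon)$-RDP guarantee of Theorem~\ref{theorem_gradient_encoding_rdp} to obtain $(1.5, T\epsilon)$-RDP, then use the standard RDP-to-DP conversion to get a $(T\epsilon + 2\log(1/\gamma), \gamma)$-DP guarantee whose $\gamma$ term does not accumulate with $T$. Your treatment is in fact somewhat more careful than the paper's (you justify adaptivity of the composition and note that re-sending the same encoded gradient to neighbors is post-processing), but the decomposition and key lemmas are identical.
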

\begin{proof}
Given $T$ training iterations, the secure gradient-sharing satisfies $(1.5, \epsilon T)$-RDP according to the sequential composition~\cite{sequential_composition}.
One property of RDP~\cite{renyi_differential_privacy} is that if $\mathcal{M}(g)$ satisfies $(\alpha, \epsilon)$-RDP, for $\forall \gamma > 0$, $\mathcal{M}(g)$ satisfies $(\epsilon^{\prime}, \gamma)$-DP with $\epsilon^{\prime} = \epsilon + \frac{log(1/\gamma)}{\alpha-1}$. 
Thus, the secure gradient-sharing mechanism satisfies $(\epsilon T+2log(1/\gamma ), \gamma)$-DP.
We can further choose large $\gamma$ to reduce the privacy budget, which concludes that the privacy budget is bounded.
\end{proof}
%

\nosection{(2) Gradient propagation} 
We devise a topology-based diffusion method named \textit{gradient propagation} to share the encoded gradients among the neighborhood.
The gossip training protocol has been widely applied in decentralized learning~\cite{gossip, gossip_2}, where users exchange gradients with their $1$-hop neighbors, given its robustness to device breakdown.
However, the encoded gradients are highly inaccurate in our setting, as the number of averaged gradients is insufficient to reduce noise.
%
\begin{algorithm}[t]
\begin{algorithmic}[1]
\Require{
User $u$, number of sampling hops $H$ and sampling users $n_{u}$
} 
\Ensure{
%
The sampled neighbors $\mathcal{N}^{s}_{v}$ for each user $v \in \mathcal{V}$
}
\State Initialize $\mathcal{N}_{v}^{s}$ as empty for $v \in \mathcal{V}$
\State $q\gets \mathrm{queue([u])}$
\For{$h \gets 1$ to $H$ \label{line_start_construct_neighborhood}}
    \State $sz \gets |q|$
    \For{$i \gets 1$ to $sz$}
        \State $v \gets q$.pop()
        \State $ws \gets$ use probabilities calculated in Equation~\eqref{equation_sample_probability} to sample $n_{u}$ users from $\mathcal{N}_{v}$ \label{line_sample}
        \While{$w \in ws$}
            \State add $w$ to $q$ and $\mathcal{N}^{s}_{v}$
            \State add $v$ to $\mathcal{N}^{s}_{w}$
        \EndWhile
    \EndFor
\EndFor \label{line_end_construct_neighborhood}
\State \textbf{return} $\mathcal{N}^{s}_{v}$ for user $v \in U$
\end{algorithmic}
\caption{Neighbor sampling for user $u$}
\label{algorithm_neighborhood_calculation}
\end{algorithm}
\begin{algorithm}[t]
\begin{algorithmic}[1]
\Require{
Encoded gradients $g^{*}_{u}$ for user $u \in U$,
number of sampling hops $H$,
sampling neighbors $\mathcal{N}_{u}^{s}$ for $u \in U$
} 
\Ensure{
Aggregated gradients $\vec{g}_{u}$ for user $u \in U$
}
\For{user $u \in U $ \textbf{parallel}}
\State $\vec{g}_{u} \leftarrow \{g^{*}_{u}\}$
\EndFor
\For{$h \gets 1$ to $2H$}
	\For{user $u \in U $ \textbf{parallel}}
    	\State send $\vec{g}_{u}$ to neighbor $v \in \mathcal{N}^{s}_{u}$ \label{algorithm_gradient_backward_aggregate}
	    \State$\vec{g}_{u} \leftarrow \vec{g}_{u} \cup \vec{g}_{v}, v \in \mathcal{N}^{s}_{u}$
	     \label{algorithm_gradient_backward_send}
	\EndFor
\EndFor
\State \textbf{return} $\vec{g}_{u}$ for user $u \in U$
\end{algorithmic}
\caption{Gradient propagation}
\label{algorithm_gradient_propagating} 
\end{algorithm}

To solve the above issue, we extend the gossip training protocol to disseminate gradients among the neighborhood.
Algorithm~\ref{algorithm_gradient_propagating} describes the process of gradient propagation. 
Similar to the message-passing architecture in GNN~\cite{graphsage, NGCF}, each user in the neighborhood sends and receives gradients in parallel (line \ref{algorithm_gradient_backward_aggregate} and \ref{algorithm_gradient_backward_send}).
Given the longest path in the training neighborhood is $2H$, the encoded gradients are disseminated among all users after $2H$ steps.

\nosection{(3) Gradient decoding}
Conventional DP mechanisms~\cite{FeSoG, FedGNN, FedREC} protect user privacy at the cost of recommendation performance as it introduces noises to model training.
We propose gradient decoding to decode received gradients in a noise-free way.

After gradient propagation, each user locally decodes the received gradients and updates their model with the decoded gradients by stochastic gradient descent (SGD) optimizer.
Here, we describe the decoding process for user $u$ as,
\begin{equation}
\bar{g}_{u} = \delta(e^{\beta} + 1)\mathrm{mean}(\vec{g}_{u}) / (e^{\beta} - 1).
\label{equation_gradient_decoder}
\end{equation}

\begin{Proposition}
The secure gradient-sharing mechanism is unbiased.
\label{proposition_gradient_encoder}
\end{Proposition}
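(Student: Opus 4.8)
The plan is to show that the expectation of the decoded gradient $\bar{g}_u$ in Equation~\eqref{equation_gradient_decoder} equals the (clipped) input gradient, so that the encode-propagate-decode pipeline introduces no bias. Since $\bar{g}_u$ is an affine function of $\mathrm{mean}(\vec{g}_u)$, and $\vec{g}_u$ is a multiset of encoded vectors $g_v^{*}$ collected from the neighborhood, by linearity of expectation it suffices to compute $\mathbb{E}[g_v^{*}]$ for a single user and a single coordinate, then trace it through the averaging and the decoding map.

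\textbf{First} I would fix one coordinate and let $g$ denote its clipped value in $[-\delta,\delta]$. From Equation~\eqref{equation_gradient_encoder}, the encoded bit $g^{*}$ takes the value $+1$ with probability $\frac{1}{e^{\beta}+1} + \frac{(e^{\beta}-1)(g+\delta)}{2(e^{\beta}+1)\delta}$ and the value $-1$ with the complementary probability. \textbf{Next} I would compute $\mathbb{E}[g^{*}] = 1\cdot p - 1\cdot(1-p) = 2p - 1$, substitute $p$, and simplify: the constant terms cancel and one is left with $\mathbb{E}[g^{*}] = \frac{(e^{\beta}-1)g}{(e^{\beta}+1)\delta}$. \textbf{Then} I would observe that averaging over the multiset $\vec{g}_u$ commutes with expectation, so $\mathbb{E}[\mathrm{mean}(\vec{g}_u)]$ has the same coordinatewise form (assuming the relevant gradients share the clipped value, or more precisely that each contributes its own clipped value and the decoding is applied per-user; for the unbiasedness claim it is the single-user identity that matters). \textbf{Finally} I would apply the decoding map: $\mathbb{E}[\bar{g}_u] = \frac{\delta(e^{\beta}+1)}{e^{\beta}-1}\cdot\mathbb{E}[\mathrm{mean}(\vec{g}_u)] = \frac{\delta(e^{\beta}+1)}{e^{\beta}-1}\cdot\frac{(e^{\beta}-1)g}{(e^{\beta}+1)\delta} = g$, which is exactly the clipped input gradient, establishing unbiasedness.

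\textbf{The main obstacle} I anticipate is not the algebra — which is a short cancellation — but stating the claim precisely enough to be true. The mechanism is only unbiased \emph{with respect to the clipped gradient}, not the raw gradient, so I would phrase the conclusion as $\mathbb{E}[\bar{g}_u] = \mathrm{clip}(g_u,\delta)$. A second subtlety is the role of gradient propagation: since each $g_v^{*}$ is an independent draw and propagation merely collects them into $\vec{g}_u$ without further randomness, the averaging step does not affect the expectation, only the variance; I would remark on this briefly to justify reducing the whole pipeline to the single-coordinate computation. I do not expect to need any result beyond Equations~\eqref{equation_gradient_encoder} and~\eqref{equation_gradient_decoder} and linearity of expectation.
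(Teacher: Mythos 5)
Your proposal is correct and follows essentially the same route as the paper's proof: compute $\mathbb{E}[g^{*}] = \frac{(e^{\beta}-1)\,g}{(e^{\beta}+1)\,\delta}$ from the Bernoulli encoding in Equation~\eqref{equation_gradient_encoder} and cancel it against the decoding factor in Equation~\eqref{equation_gradient_decoder}. Your added remarks — that unbiasedness holds with respect to the clipped gradient, and that propagation/averaging only affects variance — are sharper statements of points the paper leaves implicit, but they do not change the argument.
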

\begin{proof}
To prove the secure gradient-sharing mechanism is noise-free, we need to verify that $E(\bar{g}) = E(g)$.
Since the gradients are encoded from the Bernoulli distribution in Equation~\eqref{equation_gradient_encoder}, we have
\begin{equation}
\begin{aligned}
    E(g^{*}) =\frac{e^\beta-1}{e^\beta+1}\frac{E(g) + \delta}{\delta} -\frac{e^\beta-1}{e^\beta+1}
    = \frac{(e^\beta-1)E(g)}{(e^\beta+1)\delta} \nonumber.
\end{aligned}
\end{equation}
Combining Equation (6), we then have 
\begin{equation}
\begin{aligned}
    E(\bar{g}) &= \frac{\delta(e^\beta + 1)}{e^\beta - 1}E(g^{*}) \\
    &= \frac{\delta(e^\beta + 1)}{e^\beta - 1}\frac{(e^\beta-1)E(g)}{(e^\beta+1)\delta} \\
    &= E(g) \nonumber,
\end{aligned}
\end{equation}
which concludes that the secure gradient-sharing mechanism is noise-free.
\end{proof}

\begin{Proposition}
The secure gradient-sharing mechanism can approach the same convergence speed as centralized SGD.
\label{proposition_convergence}
\end{Proposition}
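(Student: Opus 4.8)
The plan is to reduce the statement to the classical convergence theory of stochastic gradient descent: I will show that at every iteration the decoded gradient $\bar{g}_{u}$ of Equation~\eqref{equation_gradient_decoder} is an \emph{unbiased} estimator of the neighborhood-averaged true gradient with \emph{bounded variance}, and then invoke the textbook SGD rate, observing that the secure gradient-sharing mechanism changes only a constant, not the order, relative to centralized SGD.

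First I would fix an iteration and condition on the current parameters $\Theta_{u}$. By Proposition~\ref{proposition_gradient_encoder}, decoding is unbiased, so $\mathbb{E}[\bar{g}_{u}\mid\Theta_{u}]$ equals the (clipped) gradient of the local loss in Equation~\eqref{equation_local_Loss} averaged over the users whose encodings are aggregated by gradient propagation; hence the update $\Theta_{u}\leftarrow\Theta_{u}-\eta\,\bar{g}_{u}$ is a genuine stochastic gradient step with the same mean as the centralized update. Second I would bound the variance: every coordinate of the encoded gradient $g^{*}_{u}$ produced by Equation~\eqref{equation_gradient_encoder} lies in $\{-1,+1\}$, so it has variance at most $1$; since the Bernoulli draws are independent across the $k=|\vec{g}_{u}|$ users aggregated in Algorithm~\ref{algorithm_gradient_propagating}, the mean $\mathrm{mean}(\vec{g}_{u})$ has coordinatewise variance at most $1/k$, and after the linear rescaling in Equation~\eqref{equation_gradient_decoder} we get $\mathrm{Var}(\bar{g}_{u})\le \sigma^{2}:=n_{s}\delta^{2}(e^{\beta}+1)^{2}/\bigl(k(e^{\beta}-1)^{2}\bigr)$, a finite constant independent of the iteration count $T$ (this is precisely where clipping to $[-\delta,\delta]$ is used).

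Third I would apply the standard guarantee: for an $L$-smooth loss, SGD fed with unbiased gradients of variance at most $\sigma^{2}$ and a suitably decaying step size satisfies $\min_{t\le T}\mathbb{E}\|\nabla\mathcal{L}(\Theta_{t})\|^{2}=O(1/\sqrt{T})$ in the non-convex case and $\mathbb{E}[\mathcal{L}(\Theta_{T})]-\mathcal{L}^{\star}=O(1/T)$ in the strongly convex case. Centralized mini-batch SGD obeys the identical bound with its own variance constant, so the two methods share the same asymptotic convergence speed, differing only through $\sigma^{2}$; because $\sigma^{2}$ shrinks as the neighborhood size $k$ grows, \modelname~can be driven arbitrarily close to the centralized rate, which is the claim.

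The main obstacle I anticipate is the decentralized bookkeeping rather than the single-step estimate: I must argue that after $2H$ rounds Algorithm~\ref{algorithm_gradient_propagating} leaves every user in a training neighborhood holding the \emph{same} aggregate $\mathrm{mean}(\vec{g}_{u})$, so the per-user models stay synchronized within the neighborhood and no consensus-drift term appears in the recursion; and I must control the mild statistical dependence induced by neighborhoods that overlap across iterations, which I would absorb either by assuming fresh neighbor sampling each round (as in Algorithm~\ref{algorithm_neighborhood_calculation}) or via a standard bounded-correlation argument. Once these are settled, the remainder is a direct transcription of the classical SGD convergence proof with $\sigma^{2}$ as above.
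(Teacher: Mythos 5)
The paper does not actually prove this proposition: it states it and defers the entire argument to the cited convergence reference, so there is no internal proof to compare against, and your sketch supplies the argument the paper outsources. Your single-step analysis is the right reduction and matches the spirit of that literature: conditioned on the current parameters, Proposition~\ref{proposition_gradient_encoder} gives unbiasedness of the decoded value in Equation~\eqref{equation_gradient_decoder}, each coordinate of $g^{*}_{u}$ is a $\pm 1$ Bernoulli variable with variance at most $1$, and averaging the $k$ encodings collected by Algorithm~\ref{algorithm_gradient_propagating} followed by the linear rescaling gives a variance constant of order $n_{s}\delta^{2}(e^{\beta}+1)^{2}/\bigl(k(e^{\beta}-1)^{2}\bigr)$; also, your observation that after $2H$ rounds every sampled user holds the same multiset of encoded gradients is justified, since $2H$ bounds the diameter of the sampled neighborhood.

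Two gaps remain, and the second is the heart of the matter. First, unbiasedness holds only with respect to the \emph{clipped} gradient: the decoded $\bar{g}_{u}$ estimates $\mathrm{clip}(g,\delta)$ averaged over the neighborhood, not $\nabla\mathcal{L}$ itself, so the textbook SGD theorem you invoke does not apply verbatim; you must either assume the true gradients already lie in $[-\delta,\delta]$ (so clipping is inactive) or run a clipped-SGD analysis, otherwise a bias term enters the recursion. Second, the ``decentralized bookkeeping'' you list as an anticipated obstacle is not a side issue but the actual content of the proposition: different users sit in different, overlapping neighborhoods, each maintains its own copy of the parameters, and each update uses a different neighborhood average of \emph{different users'} local losses (Equation~\eqref{equation_local_Loss}), so the iterates of distinct users drift apart and one cannot literally transcribe the single-machine proof. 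Establishing that this scheme still attains the centralized $O(1/\sqrt{T})$ rate requires a consensus-plus-descent argument (bounded disagreement between local models, controlled by the mixing/connectivity of the sampled communication graph, plus the loss-weighted, non-uniform sampling in Algorithm~\ref{algorithm_neighborhood_calculation}, which biases which local objectives are visited). That machinery is precisely what the paper's cited reference provides and what your sketch leaves open; until it is filled in, your argument shows unbiasedness and bounded variance of a single update, but not the claimed parity with centralized SGD.
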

Proposition~\ref{proposition_convergence} guarantees the convergence of our proposed method, and please refer to paper~\cite{convergence_proof} for its detailed proof.

\section{Analysis on communication cost}
\label{subsection_communication_cost}
Our proposed method has the lowest communication cost among all competitors.
Here, we provide a comparison of each method's communication cost theoretically.
Note that each user samples $n_{u}$ users from his neighbors, the training process involves $H$-hop users, and the gradient to be sent has a size of $n_{s}$.
(1) Our method requires clients to cooperate during model training and leverages a gradient encoder to minimize the communication cost.
Each user has the worst communication cost as $2Hn_{s}(1-n_{u}^H)/(1-n_{u})$. 
(2) Federated methods involve a central server to orchestrate the training process.
They have a communication cost as $d_{r}n_{s}(1-n_{u}^H)/(1-n_{u})$, where $d_{r}$ is the number of bits to express a real number, e.g., $64$.
(3) For decentralized methods, each user has a communication cost as $d_{r}n_{s}(1-n_{u}^H)/(1-n_{u})$.
In summary, our method consistently beats federated and decentralized methods since hop number $H$ is generally a small number, e.g., $3$. 
\section{Experiment}
\label{section_experiment}
\begin{table}[t]
\centering
\caption{Statistics of datasets.}
\begin{tabular}{llll}
\hline
Dataset        & \multicolumn{1}{c}{Flixster} & \multicolumn{1}{c}{\begin{tabular}[c]{@{}c@{}}Book-crossing\end{tabular}} & \multicolumn{1}{c}{Weeplaces} \\ \hline
\# Users        & 3,060                        & 3,060                                                                        & 8,720                         \\ \hline
\# Items        & 3,000                        & 5,240                                                                        & 7,407                         \\ \hline
\# Interactions & 48,369                       & 222,287                                                                       & 546,781                       \\ \hline
Sparsity       & 0.9946                       & 0.9860                                                                       & 0.9915                        \\ \hline
\end{tabular}
\label{exp_dataset_statics}
\end{table}
\begin{table*}[t]
\centering
\caption{
Performance comparison of \modelname~with state-of-the-art methods. 
In one column, the bold values correspond to the methods with the best and runner-up performances.
}
\begin{tabular}{c|ll|ll|ll}
\hline
               & \multicolumn{2}{c|}{Flixster}                          & \multicolumn{2}{c|}{Book-Crossing}                     & \multicolumn{2}{c}{Weeplaces}                          \\
               & \multicolumn{1}{c}{recall} & \multicolumn{1}{c|}{ndcg} & \multicolumn{1}{c}{recall} & \multicolumn{1}{c|}{ndcg} & \multicolumn{1}{c}{recall} & \multicolumn{1}{c}{ndcg} \\ \hline
NeuMF          & 3.28 $\pm$ 0.24                & 2.68 $\pm$ 0.20               & 6.70 $\pm$ 0.28                & 3.76 $\pm$ 0.23               & 15.85 $\pm$ 0.26               & 10.15 $\pm$ 0.12               \\
LightGCN       & 3.47 $\pm$ 0.41                & 3.13 $\pm$ 0.27               & \textbf{9.94 $\pm$ 0.23}                & \textbf{5.49 $\pm$ 0.14}               & 18.78 $\pm$ 0.10               & 12.25 $\pm$ 0.07              \\
HyperGNN       & \textbf{3.89 $\pm$ 0.34}       & \textbf{3.49 $\pm$ 0.30}      & 8.36 $\pm$ 0.41       & 4.34 $\pm$ 0.24      & \textbf{20.17 $\pm$ 0.28}      & \textbf{13.12 $\pm$ 0.21}     \\ \hline
FedRec         & 3.24 $\pm$ 0.46                & 2.62 $\pm$ 0.34               & 6.32 $\pm$ 0.39                & 3.50 $\pm$ 0.32               & 15.18 $\pm$ 0.35               & 9.27 $\pm$ 0.25               \\
FedGNN         & 3.38 $\pm$ 0.53                & 3.02 $\pm$ 0.36               & 7.40 $\pm$ 0.44                & 3.89 $\pm$ 0.33               & 17.29 $\pm$ 0.29               & 10.84 $\pm$ 0.13              \\ \hline
DMF            & 3.15 $\pm$ 0.20                & 2.52 $\pm$ 0.17               & 6.56 $\pm$ 0.36                & 3.62 $\pm$ 0.27               & 15.46 $\pm$ 0.22               & 9.77 $\pm$ 0.14               \\
DLGNN         & 3.30 $\pm$ 0.37                & 2.84 $\pm$ 0.28               & 7.73 $\pm$ 0.31                & 4.08 $\pm$ 0.24               & 18.50 $\pm$ 0.26               & 11.89 $\pm$ 0.19              \\ \hline
\textbf{\modelname} & \textbf{3.75 $\pm$ 0.41}       & \textbf{3.27 $\pm$ 0.25}      & \textbf{9.68 $\pm$ 0.35}       & \textbf{5.13 $\pm$ 0.21}      & \textbf{19.91 $\pm$ 0.31}      & \textbf{12.76 $\pm$ 0.18}     \\ \hline
\end{tabular}
\label{table_overall_comparison}
\end{table*}
This section conducts empirical studies on three public datasets to answer the following four research questions. 
\begin{itemize}[leftmargin=*]
\item \textbf{RQ1: }How does \modelname~perform, compared with the state-of-the-art methods (Section~\ref{subsec_performance_comparison})?
\item \textbf{RQ2: }How do existing methods perform under diverse demands for privacy protection (Section~\ref{subsec_generalization_research})? 
\item \textbf{RQ3:} What are the effects of different components in \modelname~(Section~\ref{subsec_ablation_study})?
\item \textbf{RQ4: }What are the impacts of different parameters on \modelname~(Section~\ref{subsec_parameter_analysis})?
\end{itemize}

\subsection{Experimental settings}
\label{subsec_experiment_settings}
\nosection{Datasets}
We conduct experiments on three public datasets, i.e., \textit{Flixster}, \textit{Book-crossing}, and \textit{Weeplaces}.
These datasets are collected with permissions and preprocessed through data anonymization.
\textit{Flixster} is a movie site where people meet others with similar movie tastes~\cite{GCMC}, and it provides user friendships and item relationships.
\textit{Book-crossing} contains a crawl from a book sharing community~\cite{exp_dataset_book_crossing}.
We connect users from the same area and build item relationships according to authors and publishers.
To ensure the dataset quality, we retain users and items with at least $18$ interactions~\cite{exp_dataset_10core_setting}.
\textit{Weeplaces} is a website that visualizes user check-in activities in location-based social networks~\cite{exp_dataset_weeplaces}.
We derive item relationships from their categories and remove users and items with less than $10$ interactions.
We show the statistics of these datasets after processing them in Table~\ref{exp_dataset_statics}.
For each dataset, we randomly select 80\% of the historical interactions of each user as the training dataset and treat the rest as the test set.
We randomly select $10\%$ of interactions in the training dataset as the validation set to tune hyperparameters.

\nosection{Comparison methods}
The comparison methods involve three types of recommendation methods, i.e., centralized methods, federated methods, and decentralized methods.

\nosection{(1) Centralized methods}
We implement three centralized recommendation methods, namely \textbf{NeuMF}~\cite{NeuMF}, \textbf{LightGCN}~\cite{LightGCN}, and \textbf{HyperGNN}~\cite{HyperGNN}.
NeuMF equips MLP to capture user preference on the item.
LightGCN and HyperGNN adopt graph neural networks.
The difference between LightGCN and HyperGNN is that LightGCN leverages pairwise relationships between users and items, whereas HyperGNN utilizes higher-order relationships. 

\nosection{(2) Federated methods}
We implement two federated recommendation methods, i.e., \textbf{FedRec}~\cite{FedREC} and \textbf{FedGNN}~\cite{FedGNN}.
FedRec is a federated matrix factorization (MF) method.
FedGNN models high-order user-item relationships and adds Gaussian noises to protect user privacy.

\nosection{(3) Decentralized methods}
We implement two decentralized recommendation methods named \textbf{DMF}~\cite{DMF} and \textbf{DLGNN}.
DMF is a decentralized MF method.
DLGNN is a decentralized implementation of LightGCN.
Each user trains his model collaboratively with his neighbors by sharing gradients in these two methods.

\textit{To give a fair comparison, we provide side information for all methods.}
(1) For MF methods, we leverage side information as regularization on embeddings, encouraging users and items to share similar embeddings with their neighbors~\cite{Soreg}.
(2) For GNN methods, we leverage side information to establish user-user and item-item relationships in the graph~\cite{KCGN}.

\nosection{Evaluation metrics}
For each user in the test set, we treat all the items that a user has not interacted with as negative.
Each method predicts each user's preferences on all items except the positive ones in the training set.
To evaluate the effectiveness of each model, we use two widely-adopted metrics, i.e., recall@K and ndcg@K~\cite{NeuMF, DGCF}.
In our experiments, we set $K=20$.
%

\nosection{Parameter settings}
\label{subsec_parameter_settings}
For all methods, we use SGD as the optimizer.
We search the learning rate in $[0.001,0.005,0.01]$, the L2 regularization coefficient in $[10^{-3}, 10^{-2}, 10^{-1}]$, and the embedding size in $\left[16,32,48,64\right]$.
For federated methods, we set the gradient clipping threshold $\delta$ to $0.1$ and privacy-preserving strength $\epsilon$ to $1$.
For our method, we set $\delta$ to $0.1$ and perturbation strength $\beta$ to $1$, which satisfies $(1.5, 0.54)$-RDP.
For GNN methods, we search the number of layers in $\left[1,2,3\right]$.
For NeuMF, we search the dropout in $[0.0,0.1,0.2,0.3,0.4,0.5]$.
For our method, we search the interest number $n_{i}$ in $[6, 12, 18]$ and interest dimension $d_{i}$ in $[5, 10, 15, 30]$, respectively.
Besides, we set sampling hops $H=4$ and sampling numbers $n_{u}=3$. 

We optimize all hyperparameters carefully through grid search for all methods to give a fair comparison.
We repeat the experiment with different random seeds five times and report their average results.
The consumed resources vary with the methods and hyperparameters.
We simulate a decentralized environment on a single device, establishing individual recommendation models for each user on that device.

\subsection{Overall performance comparisons (RQ1)}
\label{subsec_performance_comparison}
\begin{figure*}[t]
\subfigure{
\includegraphics[width=5.5cm]{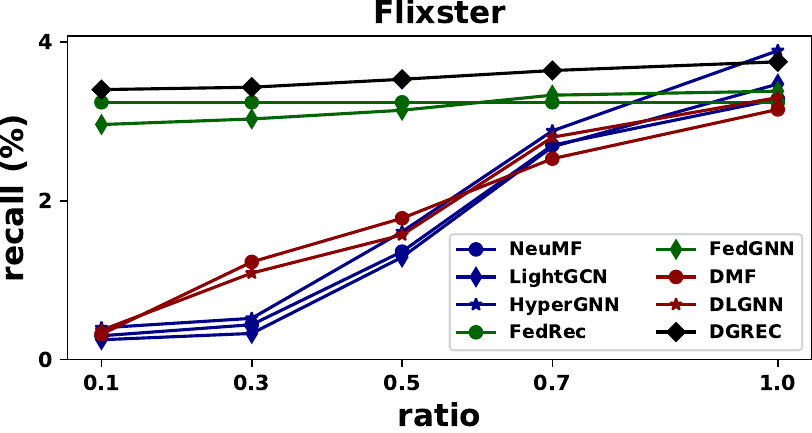}
\includegraphics[width=5.5cm]{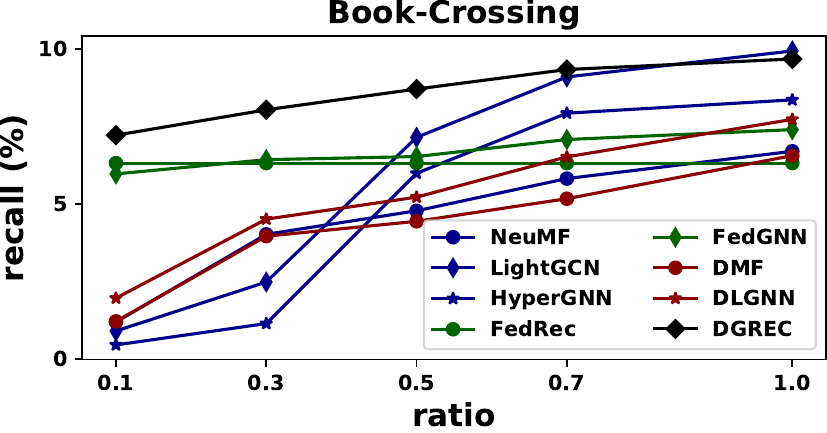}
\includegraphics[width=5.5cm]{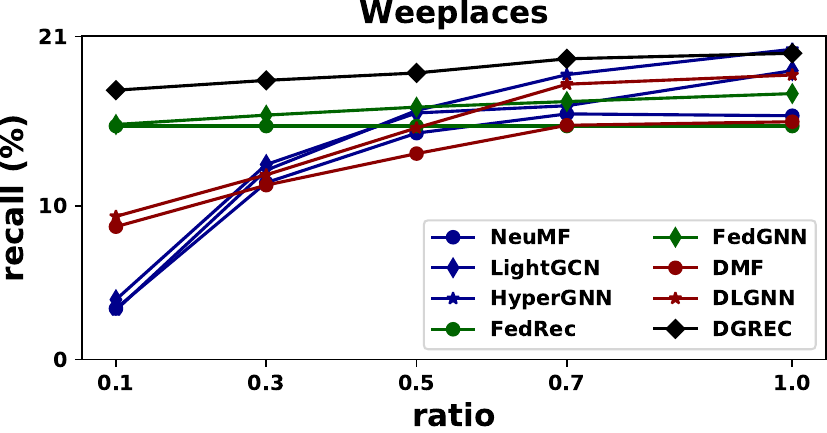}
}
\caption{Performance with different ratios of public interactions.}
\label{figure_public_ratio_comparison}
\end{figure*}
We compare each method in conventional recommendation scenarios where all user interactions are public.
We show the comparison results in Table~\ref{table_overall_comparison}.
From it, we observe that:
(1) Centralized methods achieve superior performance on all datasets as they can leverage all user interactions to make recommendations.
For protecting user privacy, federated and decentralized methods achieve competitive performance, demonstrating the feasibility of making accurate recommendations without violating user privacy.
(2) GNN methods generally outperform MF methods.
FedGNN, DLGNN, and \modelname~outperform the centralized MF method, i.e., NeuMF, on all datasets.
This result demonstrates the superiority of GNN in recommendations, as it can vastly enhance user and item representations.
(3) \modelname~achieves the best recommendation performance among all privacy-preserving methods.
It performs close to LightGCN and HyperGNN, which are two centralized GNNs.
These two results demonstrate that \modelname~can protect user privacy cost-effectively.

\begin{figure*}[t]
\subfigure[Performance on the \textit{Book-Crossing.}]{
\includegraphics[width=5.5cm]{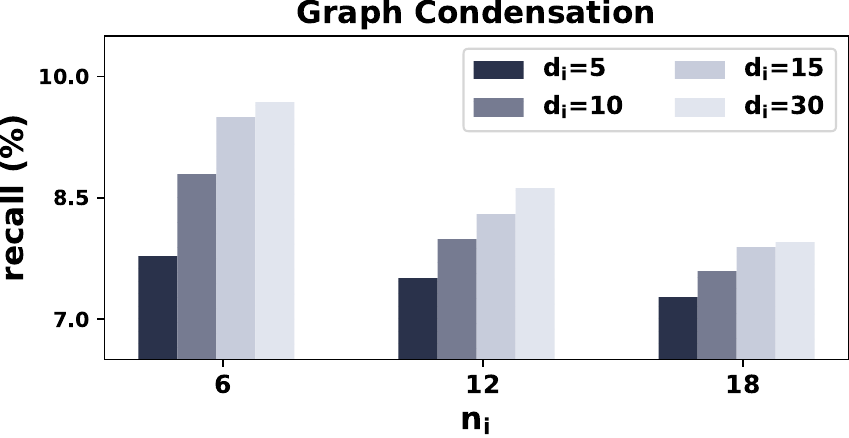}
\includegraphics[width=5.5cm]{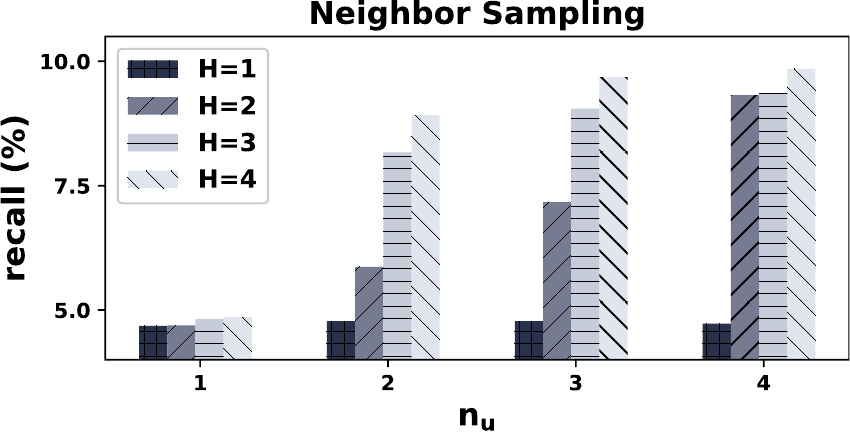}
\includegraphics[width=5.5cm]{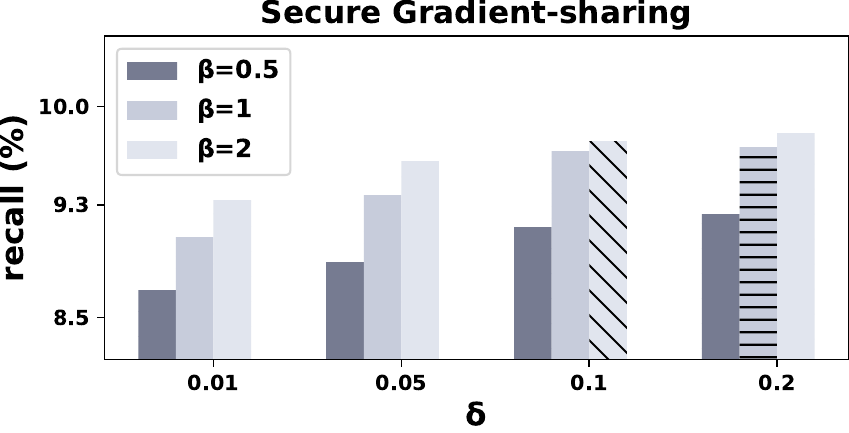}
}

\subfigure[Performance on the \textit{Weeplaces}.]{
\includegraphics[width=5.5cm]{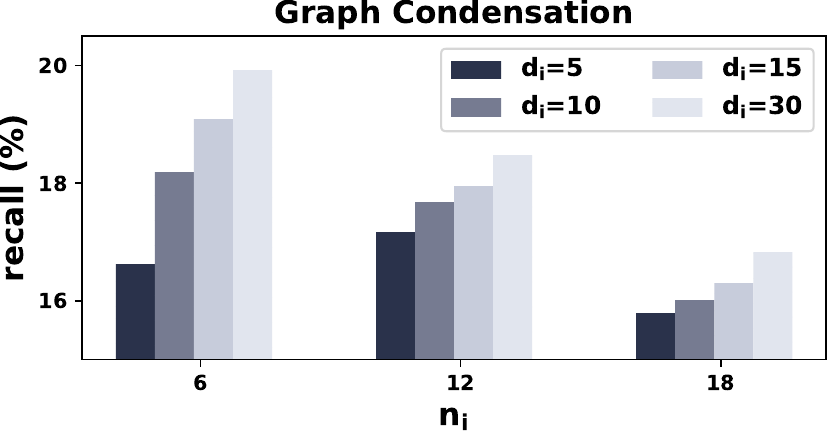}
\includegraphics[width=5.5cm]{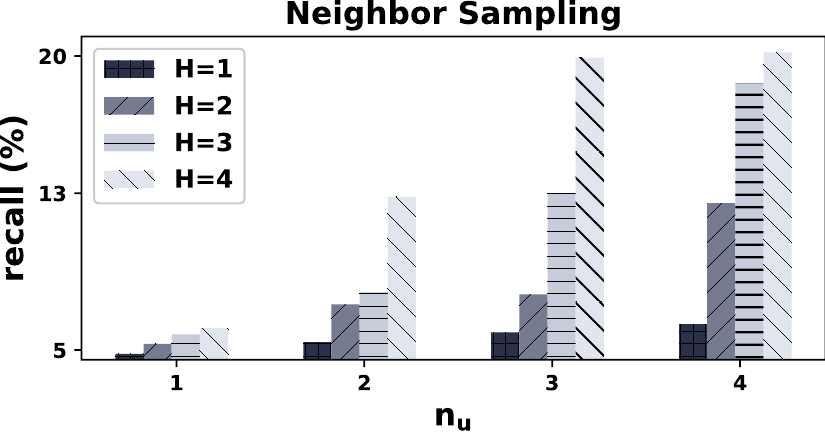}
\includegraphics[width=5.5cm]{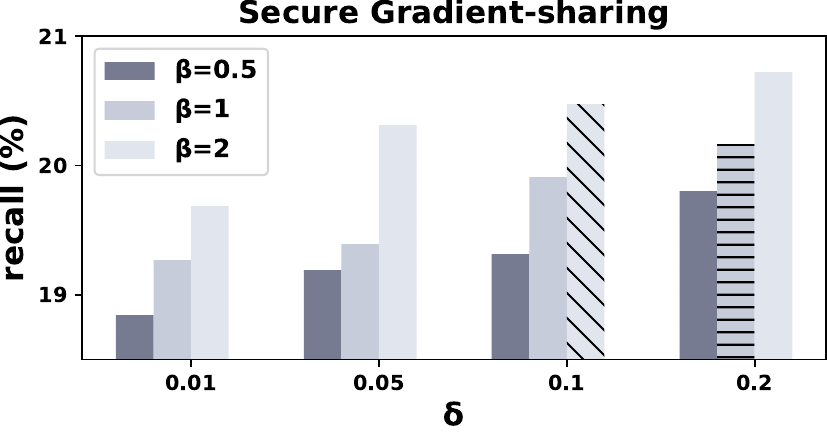}
}
\caption{Performance with different model parameters.}
\label{figure_parameter_analysis}
\end{figure*}

\subsection{Generalization research (RQ2)}
\label{subsec_generalization_research}
\begin{table}[t]
\centering
\caption{Performance on all variants of \modelname.}
\begin{tabular}{l|cc|cc}
\hline
                   & \multicolumn{2}{c|}{Book-Crossing} & \multicolumn{2}{c}{Weeplaces}   \\
                   & recall           & ndcg            & recall         & ndcg           \\ \hline
w/non-item graph     & 6.98             & 3.83            & 16.18          & 10.45          \\
w/non-neighbor   & 6.77             & 3.61            & 17.07          & 10.71          \\
w/non-item hypergraph      & 9.20             & 4.66            & 17.69          & 11.17          \\ \hline
w/non-attention      & 9.19             & 4.78            & 19.35          & 12.43          \\
w/non-pearson & 9.41             & 4.84            & 18.53          & 11.88          \\ \hline
w/non-sharing & 8.17             & 4.11            & 18.46          & 11.96          \\ \hline
\textbf{\modelname}     & \textbf{9.68}    & \textbf{5.13}   & \textbf{19.91} & \textbf{12.76} \\ \hline
\end{tabular}
\label{table_variants_performance}
\end{table}
We evaluate how different methods meet diverse demands for privacy protection.
Specifically, we set different publicized interaction ratios for each user, i.e., the proportion of publicized interactions to total interactions.
(1) Centralized and Decentralized methods can only utilize publicized interactions to build graphs and train models.
(2) Federated methods can only utilize publicized interactions to build the graph but all interactions to train models.
We show the comparison results in Figure~\ref{figure_public_ratio_comparison}.
From it, we observe that:
(1) All centralized and decentralized methods have performance degradation as publicized interaction ratio decreases.
In contrast, \modelname~and federated methods can keep certain recommendation performance.
(2) When we set publicized interaction ratio below $1$, \modelname~achieves consistent superiority on all datasets, demonstrating its ability to meet diverse demands for privacy protections.

\subsection{Ablation study (RQ3)}
\label{subsec_ablation_study}
We verify the effectiveness of different components in \modelname~on \textit{Book-Crossing} and \textit{Weeplaces}.
(1) \textbf{w/non-item graph}, \textbf{w/non-neighbor}, and \textbf{w/non-item hypergraph} are three variants of constructing item hypergraph. 
w/non-item graph averages embeddings of interacted items to model user preferences.
w/non-neighbor constructs the item hypergraph without neighbors' publicized interactions.
w/non-item hypergraph constructs an item-tag bipartite graph for each user.
(2) \textbf{w/non-attention} and \textbf{w/non-pearson} are two variants of modeling user preference.
w/non-attention replaces the interest attention with attention-pooling~\cite{attention}.
w/non-pearson removes the Pearson loss defined in Equation~\eqref{equation_pearson_loss}.
(3) \textbf{w/non-sharing} is the variant of secure gradient-sharing, which adds zero-mean Laplacian noise to protect user privacy in place of gradient encoding and decoding.

We show the comparison results of the ablation study in Table ~\ref{table_variants_performance}.
From it, we observe that:
(1) All proposed components are indispensable in \modelname.
Constructing an item graph and utilizing neighbors' publicized interactions are two dominant reasons for performance improvement.
The former contributes to distinguishing a user's core and temporary interests, whereas the latter leverages user behavior similarity.
(2) Constructing an item-tag bipartite graph cannot achieve competitive performance, especially for the datasets with the most user interactions, i.e., \textit{Weeplaces}.
Adding tag nodes hinders modeling user preference as tags are not related to the main task.  
(3) Utilizing interest attention and Pearson loss increases model performance, as these two methods can mitigate damage from noisy and redundant interests.
(4) Adding Laplacian noise to protect user privacy causes performance degradation, whereas secure gradient-sharing is noise-free and hence retains model performance.
%


\subsection{Parameter analyses (RQ4)}
\label{subsec_parameter_analysis}

\begin{figure}[t]
\centering
\includegraphics[width=7cm]{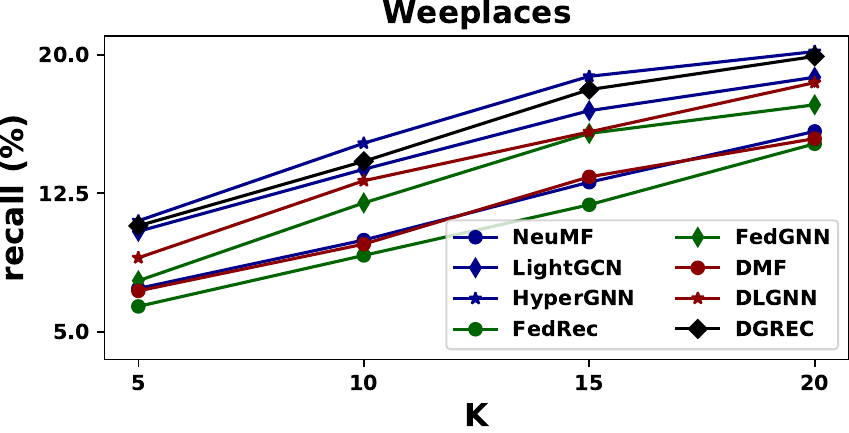}
\caption{Performance with different numbers of recommended items.}
\label{figure_pa_top_k}
\end{figure}

We first evaluate the number of recommended items $K$ for model performance on \textit{Weeplaces} and depict the results in Figure~\ref{figure_pa_top_k}.
We can observe that the performance of all models increases with $K$ and \modelname~achieves runner-up performances for all choices of $K$.

We then evaluate the impacts of different parameters on graph condensation, neighbor sampling, and secure gradient-sharing on \textit{Book-Crossing} and \textit{Weeplaces}.
We give the results of parameter analyses in Figure~\ref{figure_parameter_analysis}.
From it, we observe that:
(1) For graph condensation (number of interests $n_{i}$ and interest dimension $d_{i}$), model performance increases with $d_{i}$ but decreases with $n_{i}$.
A large interest dimension generally results in better interest representations, but redundant interests are bad for model convergence.
(2) For neighbor sampling (sampling hops $H$ and sampling numbers $n_{u}$), model performance increases with $H$ and $n_{u}$.
When we set $H$ to $1$, our gradient propagation is degraded to the gossip training mechanism.
Although users and their $1$-hop neighbors are more likely to share similar preferences, involving $1$-hop neighbors for model training cannot achieve competitive results.
The result motivates us to form a more extensive neighborhood to train recommendation models.
(3) For secure gradient-sharing mechanism (clip value $\delta$ and perturbation strength $\beta$), our model performance increases with $\delta$ and $\beta$.
However, setting larger $\delta$ and $\beta$ consumes a greater privacy budget, decreasing privacy-preserving strength.
In our scenario, setting $\delta=0.1$ and $\beta=1$ is sufficient to achieve good results.
%
%
\section{Conclusion}
\label{section_conclusion}
Under strict data protection rules, online platforms and large corporations are now precluded from amassing private user data to generate accurate recommendations. 
Despite this, the imperative need for recommender systems remains, particularly in mitigating the issue of information overload.
The development and refinement of privacy-preserving recommender systems receive increasing attention.

In this paper, we propose \modelname~for privacy-preserving recommendation, which achieves superior recommendation performance and provides strong privacy protection.
\modelname~allows users to publicize their interactions freely, which includes three stages, i.e., graph construction, local gradient calculation, and global gradient passing.
In future work, we will extend our research to untrusted environments, where other clients may be semi-honest and malicious, and study cryptography techniques, e.g., zero-knowledge proof, to solve the issue.
We will also investigate defence methods against poisoning attacks.
%

\section*{Acknowledgment}
This work is supported by National Key R\&D Program of China (2022YFB4501500, 2022YFB4501504).

\bibliographystyle{ACM-Reference-Format}
\balance
\bibliography{reference}


\end{document}